\documentclass[submission,copyright,creativecommons]{eptcs}
\usepackage{breakurl}             
\RequirePackage{ifthen}
\usepackage{makeidx}
\usepackage[usenames,dvipsnames]{color}
\usepackage[switch, modulo]{lineno}
\usepackage{amsmath}
\usepackage{amssymb}
\usepackage{amsthm}

\iftrue

\usepackage{titlesec}
\titleformat{\section}[runin]
  {\bfseries}
  {\thesection}{1em}{}[.]
\titlespacing{\section}
  {0pt}
  {1em}
  {1em}
\fi


\newtheorem{DEF}{Definition}{\bfseries}{\rmfamily}
{\bfseries}{\rmfamily}
{\bfseries}{\itshape}
{\bfseries}{\rmfamily}
{\bfseries}{\rmfamily}
\newtheorem{LEM}{Lemma}{\bfseries}{\itshape}
\newtheorem{THM}{Theorem}{\bfseries}{\itshape}
{\bfseries}{\itshape}
{\bfseries}{\rmfamily}

\newcommand{\comment}[1]{{}}
\newcommand{\COMMENT}[1]{}
\newcommand{\TZ}[1]{{#1}}
\newcommand{\MD}[1]{{#1}}

\newcommand{\ASET}[1]{\{#1\}}
\newcommand{\PAR}{\ | \ }

\newcommand{\protocolS}{{\textbf{protocol}}}
\newcommand{\roleS}{{\textbf{role}}}

\newcommand{\fromS}{{\textbf{from}}}
\newcommand{\toS}{{\textbf{to}}}
\newcommand{\choiceS}{{\textbf{choice}}}
\newcommand{\atS}{{\textbf{at}}}

\newcommand{\orS}{{\textbf{or}}}
\newcommand{\runS}{\textbf{run}}

\newcommand{\natk}{\kw{nat}}
\newcommand{\boolk}{\kw{bool}}
\newcommand{\intk}{\keyword{int}}
\newcommand{\stringk}{\keyword{str}}
\newcommand{\unitk}{\keyword{unit}}

\newcommand{\Context}{\mathtt{C}}
\newcommand{\GAM}{\textit{G}}

\newcommand{\sort}{\mathit{S}}
\newcommand{\kend}{\mathtt{end}}
\newcommand{\M}{\textit{L}}

\newcommand{\GInter}[2]{#1 \rightarrow #2}

\newcommand{\ROLE}[1]{\mathtt{#1}}
\newcommand{\RecT}{\keyword{t}}

\newcommand{\Gmodular}{\ensuremath{\ell}}

\newcommand{\goto}{\textbf{call} \ }
\newcommand{\gotofun}[2]{#1 \Downarrow #2}
\newcommand{\modularise}[1]{\mathbb{L}(#1)}
\newcommand{\actset}[2]{ [  \! [ #1 ] \! ]_{#2}}




\newcommand{\DTRANS}[1]{\xrightarrow{#1}_{\scriptsize {D}}}
\newcommand{\DDTRANS}[1]{\stackrel{#1~~}{\Longrightarrow_{\scriptsize {D}}}}
\newcommand{\DSTRANS}[1]{\stackrel{~#1~~*}{\longrightarrow_{\scriptsize {D}}}}

\newcommand{\INFER}[2]{\frac{\displaystyle{#1}%
\vspace{2mm}%
}{
\vspace{2mm}%
\displaystyle{#2}
}}

{\end{array}%
 \right.}

\newcommand{\mrule}[1]{{\footnotesize{\ensuremath{[\text{\sc{#1}}]}}}}

\newcommand{\keyword}[1]{\textsf{\upshape #1}}
\newcommand{\kw}{\keyword} 



\newcommand{\sas}{\mathcal{A}}
\newcommand{\sbs}{\mathcal{B}}
\newcommand{\scs}{\mathcal{C}}
\newcommand{\gotoS}{\textbf{call}}
\newcommand{\pa}[2]{(#1,#2)}


\title{Lightening Global Types}

\author{Tzu-chun Chen
\institute{Dipartimento di Informatica, Universit\`a di Torino, Italy}
\email{chen@di.unito.it}
}

\clearpage

\begin{document}
\maketitle

\begin{abstract}
Global session types prevent participants from waiting for never coming messages. 
Some interactions take place just for the purpose of informing receivers
that some message will never arrive or the session is terminated. By decomposing a big global type into several light global types, one can avoid such kind of redundant interactions. Lightening global types gives us cleaner global types,  which keep all necessary communications. This work proposes a framework which allows to easily decompose global types into light global types,  preserving the interaction sequences of the original ones but for redundant interactions.
\end{abstract}

\section{Introduction}
\label{sec:introduction}
Since 
cooperating tasks and sharing resources through communications 
under network infrastructures (e.g. clouds, large-scale distributed systems, etc.)
has become the norm and the services for communications are growing with increasing users,
it is a need to give programmers 
an easy and powerful programming language for developing applications of interactions.
For this aim, Scribble \cite{HondaMBCY11}, 
a communication-based programming language,
is introduced building on the theory of global types \cite{HYC08,BettiniCDLDY08}.
A developer can use Scribble to code a global protocol, which
stipulates any local endpoints (i.e. local applications) participanting in it.
The merits of coding global protocols, 
rather than just coding the local ones,  
are (1) giving all local participants a clear blue map of
what events they are involved in and what purposes of those events and
(2) making it easier and more efficient to exchange, share, and maintain communications plans 
(e.g. design of global protocols) across organisations.
However, the tool itself cannot ensure an efficient communication programming. 
The scenario of a global communication can be very complicated so
it becomes a burden for programmers 
to correctly code interactions which satisfy protocols (described by global types).
At runtime, 
the cost for keeping all resources ready for a long communication
and for maintaining the safety of the whole system can increase a lot. 

For example, assume a gift requester needs a \textit{key} (with her identity) to get a wanted gift.
In order to get the key, she needs to get a \textit{guide}, which is a map for finding the key.
It is like searching for treasures step by step, 
where a player needs not be always online in {\em one} session
for completing the whole procedure. 
Instead, the communication protocol
can be viewed as {\em separated but related} sessions
which are linked (we use $\gotoS$s to switch from a session to another).
For example, 
let a session $\sas$ do the interactions for getting \textit{guide},
and another session $\sbs$ do the interactions for getting \textit{key} with \textit{guide}. 
Both \textit{guide} and \textit{key} are knowledge gained from these interactions.
\textit{guide} bridges $\sas$ and $\sbs$ as it is gained in $\sas$ and used in $\sbs$.
The participant then uses \textit{key}, if she successfully got it in $\sbs$,
to gain the wanted gift. Let session $\scs$ implements these final interactions.

As standard we use global types \cite{HYC08,BettiniCDLDY08} to describe interaction protocols, adding a $\goto$ command and type declarations for relating sessions. We call {\em light global types} the global types written in the extended syntax.  
Figure \ref{fig:intro:sepsession} simply represents
the difference between viewing all interactions as one scenario and 
viewing interactions as three separated ones. As usual $\GInter{\ROLE{store}}{\ROLE{req}}: 
\{ \textit{\color{Green}{yes3}}(\stringk). \kend, \textit{\color{Green}{no3}}( ) . \kend \}$ models a communication where role $\ROLE{store}$ sends to role $\ROLE{req}$ either the label $\textit{\color{Green}{yes3}}$ and a value of type $\stringk$ or the label $\textit{\color{Green}{no3}}$, and in both cases $\kend$ finishes the interaction. The construct $\goto\Gmodular_b$ indicates that the interaction should continue by executing the session described by the 
light global type associated to the name $\Gmodular_b$.

\begin{figure}
{
$$
\begin{array}{ll}
\begin{array}{rcl}
\GAM & = &
\GInter{\ROLE{req}}{\ROLE{map}}: {\color{blue}\textit{req1}}(\stringk) . \\
& &
  \GInter{\ROLE{map}}{\ROLE{req}}: \\
  & & \{ \ {\color{blue}\textit{yes1}}(\stringk). \\
  & & \ \ \ 
     \GInter{\ROLE{req}}{\ROLE{issuer}}: {\color{red}\textit{req2}}(\stringk). \\
      & &  \ \ \ 
      \GInter{\ROLE{issuer}}{\ROLE{req}}:\\
      & & \ \ \ \{ \  {\color{red}\textit{yes2}}(\intk).\\
       & & \ \ \ \ \ \ 
             \GInter{\ROLE{req}}{\ROLE{store}} : {\color{Green}\textit{req3}}(\intk). \\
       & & \ \ \ \ \ \ 
             \GInter{\ROLE{store}}{\ROLE{req}}: \\
       & & \qquad \{ \ 
              { \textit{\color{Green}{yes3}}(\stringk) . \kend
              ,
              \textit{\color{Green}{no3}}()}. \kend  \ \}, \\
   & & \ \ \ \phantom{ \{ }{\color{red} \ \textit{no2}}() . 
                  \GInter{\ROLE{req}}{\ROLE{store}}: \textit{no4}().\kend \ \}, \\
& &  \phantom{\{ \ } {\color{blue} \textit{no1}}().\\
& &   \ \ \ \
         \GInter{\ROLE{req}}{\ROLE{issuer}}: \textit{no5}().\\
& &   \ \ \ \
         \GInter{\ROLE{req}}{\ROLE{store}}: \textit{no6}().
\kend \ \} \\
\end{array}
&
\begin{array}{rcl}
\Gmodular_a & = &
\GInter{\ROLE{req}}{\ROLE{map}}: {\color{blue}\textit{req1}}(\stringk) . \\
& &
\GInter{\ROLE{map}}{\ROLE{req}}: \\
& &
\{ \ {\color{blue}\textit{yes1}}(\stringk). \goto \Gmodular_b 
,  {\color{blue}\textit{\textit{no1}}}( ) . \kend  \ \}  \\\\
\Gmodular_b & = &
\GInter{\ROLE{req}}{\ROLE{issuer}}: {\color{red}\textit{req2}}(\stringk) . \\
& &
\GInter{\ROLE{issuer}}{\ROLE{req}}: \\
& & 
\{ \ {\color{red} \textit{yes2}}(\intk).\goto \Gmodular_c 
, {\color{red}\textit{no2}}( ) . \kend  \ \} \\\\
\Gmodular_c & = &
\GInter{\ROLE{req}}{\ROLE{store}}: {\color{Green}\textit{req3}}(\intk) . \\
& &
\GInter{\ROLE{store}}{\ROLE{req}}: \\
& & 
\{ \ \textit{\color{Green}{yes3}}(\stringk).\kend
, \textit{\color{Green}{no3}}( ) . \kend \ \}
\end{array}
\end{array}
$$
\caption{Viewing interactions as a whole or as separated but related ones.
\label{fig:intro:sepsession}}
}
\end{figure}

The set of light global types associated to the names $\Gmodular_a$, $\Gmodular_b$ and  $\Gmodular_c$ describe the same 
protocol given by the global type $\GAM$. An advantage of lightening is to spare communications needed only to warn participants they will not receive further messages. 
We call such communications {\em redundant} ones. In the example we avoid
the interactions $\GInter{\ROLE{req}}{\ROLE{store}}: \textit{no4}()$ and
$\GInter{\ROLE{req}}{\ROLE{issuer}}: \textit{no5}().\GInter{\ROLE{req}}{\ROLE{store}}: \textit{no6}()$.
Moreover, lightening prevents both local participants and global network  
from wasting resources, e.g. keeping online or waiting for step-by-step permissions.

\begin{figure}
\[\small
\begin{array}{ll}
\begin{array}{l}
\protocolS\ \text{getGift}\\[0.03mm]
\qquad \qquad 
(\roleS\ \ROLE{req}, 
\roleS\ \ROLE{map},\\[0.03mm]
\qquad \qquad \
\roleS\ \ROLE{issuer},
\roleS\ \ROLE{store})\{\\[0.03mm]
\quad
{\color{blue}\textit{req1}}(\stringk\ \text{identity})\
\fromS\ \ROLE{req}\ \toS\ \ROLE{map}; \\[0.03mm]

\quad 
\choiceS\ \atS\ \ROLE{map}\{
\\[0.03mm]
\qquad 
{\color{blue}\textit{yes1}}(\stringk\ \text{guide})\
\fromS\ \ROLE{map}\ \toS\ \ROLE{req}; \\[0.03mm]
\qquad 
{\color{red}\textit{req2}}(\stringk\ \text{guide})\
\fromS\ \ROLE{req}\ \toS\ \ROLE{issuer}; \\[0.03mm]
\qquad 
\choiceS\ \atS\ \ROLE{issuer} \{ \\[0.03mm]
\qquad \quad
   {\color{red}\textit{yes2}}(\intk\ \text{key})\
   \fromS\ \ROLE{issuer}\ \toS\ \ROLE{req}; \\[0.03mm]
\qquad \quad
   {\color{Green}\textit{req3}}(\intk\ \text{key})\
   \fromS\ \ROLE{req}\ \toS\ \ROLE{store}; \\[0.03mm]
\qquad \quad      
\choiceS\ \atS\ \ROLE{store} \{ \\[0.03mm]
\qquad \quad \quad
  {\color{Green}\textit{yes3}}(\stringk\ \text{gift})\
   \fromS\ \ROLE{store}\ \toS\ \ROLE{req}; \\[0.03mm]

\qquad \quad
\} \ \orS\ \{ \\[0.03mm]
\qquad \quad \quad 
 {\color{Green}\textit{no3}}()\
   \fromS\ \ROLE{store}\ \toS\ \ROLE{req}; \\[0.03mm]
\qquad \quad
\} \\[0.03mm]
\qquad
\} \ \orS\ \{ \\[0.03mm]
\qquad \quad
  {\color{red}\textit{no2}}()\
  \fromS\ \ROLE{issuer}\ \toS\ \ROLE{req};\\[0.03mm]
\qquad \quad
  \textit{no4}()\
  \fromS\ \ROLE{req}\ \toS\ \ROLE{store};\\[0.03mm]
\qquad 
\}
\\[0.03mm]
\quad 
\}\ \orS\ \{ \\[0.03mm]
\qquad 
{\color{blue}\textit{no1}()}\
\fromS\ \ROLE{map}\ \toS\ \ROLE{req} ;\\[0.03mm]
\qquad 
{\textit{no5}}()\
\fromS\ \ROLE{req}\ \toS\ \ROLE{issuer}; \\[0.03mm]
\qquad
{\textit{no6}}()\
\fromS\ \ROLE{req}\ \toS\ \ROLE{store}; \\[0.03mm]
\quad\}
\\
\}
\end{array}
& 
\begin{array}{l}
\protocolS\ \text{getGuide}\ 
(\roleS\ \ROLE{req}, 
\roleS\ \ROLE{map})\{\\[0.03mm]
\quad
{\color{blue}\textit{req1}}(\stringk\ \text{identity})\
\fromS\ \ROLE{req}\ \toS\ \ROLE{map}; \\[0.03mm]
\quad
\choiceS\ \atS\ \ROLE{map}\{\\[0.03mm]
\qquad \
{\color{blue}\textit{yes1}}(\stringk\ \text{guide})\
\fromS\ \ROLE{map}\ \toS\ \ROLE{req}; \\[0.03mm] 
\qquad \
\runS\ \protocolS\ \\[0.03mm]
\qquad \qquad \quad
\text{getKey}(\roleS\ \ROLE{req}, \roleS\ \ROLE{issuer})\ \atS\ \ROLE{req}; \\[0.03mm]
\quad 
\} \ \orS\ \{ \\[0.03mm]
\qquad \
{\color{blue}\textit{no1}}()\
\fromS\ \ROLE{map}\ \toS\ \ROLE{req}; \\[0.03mm]
\quad 
\} 
\\[0.03mm]
\}
\\[0.1mm]
\protocolS\ \text{getKey}\ 
(\roleS\ \ROLE{req}, 
\roleS\ \ROLE{issuer} )\{\\[0.03mm]
\quad
{\color{red}\textit{req2}}(\stringk\ \text{guide})\
\fromS\ \ROLE{req}\ \toS\ \ROLE{issuer}; \\[0.03mm]
\quad
\choiceS\ \atS\ \ROLE{issuer}\{\\[0.03mm]
\qquad 
{\color{red}\textit{yes2}}(\intk\ \text{key})\
\fromS\ \ROLE{issuer}\ \toS\ \ROLE{req}; \\[0.03mm]
\qquad
\runS\ \protocolS\ \\[0.03mm]
\qquad \qquad \quad
\text{getGift}'(\roleS\ \ROLE{req}, \roleS\ \ROLE{store})\ \atS\ \ROLE{req}; \\[0.03mm]
\quad 
\} \ \orS\ \{ \\[0.03mm]
\qquad 
{\color{red}\textit{no2}}()\
\fromS\ \ROLE{issuer}\ \toS\ \ROLE{req}; \\[0.03mm]
\quad 
\} 
\\[0.03mm]
\}
\\[0.1mm]
\protocolS\ \text{getGift}'\ 
(\roleS\ \ROLE{req}, 
\roleS\ \ROLE{store} )\{\\[0.03mm]
\quad
{\color{Green}\textit{req3}}(\intk\ \text{key})\
\fromS\ \ROLE{req}\ \toS\ \ROLE{store}; \\[0.03mm]
\quad
\choiceS\ \atS\ \ROLE{store}\{\\[0.03mm]
\qquad
{\color{Green}\textit{yes3}}(\stringk\ \text{gift})\
\fromS\ \ROLE{store}\ \toS\ \ROLE{req}; \\[0.03mm] 
\quad 
\} \ \orS\ \{ \\[0.03mm]
\qquad
{\color{Green}\textit{no3}}()\
\fromS\ \ROLE{store}\ \toS\ \ROLE{req}; \\[0.03mm]
\quad 
\} 
\\[0.03mm]
\}
\end{array}
\end{array}
\]
\caption{Scribble 
for $\protocolS$s getGift (LHS) and 
for getGuide, getKey, and $\text{getGift}'$ (RHS).
\label{fig:intro:scribble}}
\end{figure}

Features of lightening become more clear by looking at the Scribble code implementing the global types of Figure~\ref{fig:intro:sepsession}, see Figure~\ref{fig:intro:scribble}. The words in bold are keywords.
In Scribble  after the  \protocolS\ keyword   one writes
the protocol's name
and declares the roles involved in the session,
then one describes the interactions in the body.
The left-hand side (LHS) of Figure \ref{fig:intro:scribble}
shows the Scribble code corresponding to the global type $\GAM$.
Although the code is for a simple task,
it is involved
since there are three nested $\choiceS$s.
On the contrary, the right-hand side (RHS) of Figure \ref{fig:intro:scribble}
clearly illustrates the steps for getting a gift in three small protocols (corresponding to 
the light global types associated to $\Gmodular_a$, $\Gmodular_b$, $\Gmodular_c$ respectively), 
which are separated but linked (by $\runS$ and $\atS$) for preserving the causality.

\bigskip

The structure of the paper is the following. 
Section \ref{sec:grammar} introduces 
a framework of light global session types,
which gives a syntax to easily compose a global type by light global types.
Section \ref{sec:decompose} proposes a function for 
decomposing a general global type into light global types, 
e.g. it decomposes $\GAM$ into the types associated to $\Gmodular_a, \Gmodular_b$ and $\Gmodular_c$.
Section \ref{sec:theory} proves the soundness of the function,
and Section \ref{sec:final} discusses related and future works. 

\section{Syntax of (light) global types}
\label{sec:grammar}
The syntax for {\em global types} is standard:
$$
\begin{array}{rclrcl}
\GAM & :: = & 
\GInter{\ROLE{r}_1}{\ROLE{r}_2}: \{l_j(\sort_j) . 
\GAM_j \}_{j \in J} 
\PAR
\mu \RecT . \GAM
\PAR
\RecT
\PAR
\kend&\hfill\text{where }\sort &::=& \unitk \PAR \natk \PAR \stringk \PAR \boolk 
\end{array}
$$
The branching 
$\GInter{\ROLE{r}_1}{\ROLE{r}_2}: \{ l_j(\sort_j) . \GAM_j \}_{j \in J}$ 
says that role $\ROLE{r}_1$ sends a label $l_j$ and a message of type $\sort_j$ to $\ROLE{r}_2$ by selecting $j\in J$ 
and then the interaction continues as described in $\GAM_j$.
$\mu \RecT . \GAM$ is a recursive type,
where $\RecT$ is guarded in $\GAM$ in the standard way.
$\kend$ means termination of the protocol. We write $l ()$ as short for $l(\unitk)$ and we omit brackets when there is only one branch.

{\em Light global types} are global types extended with declarations and the construct $\gotoS$:
$$
\begin{array}{rclrcl}
D &::=& \overrightarrow{\Gmodular= \M} 
&\qquad\qquad
\M & :: = & 
\GInter{\ROLE{r}_1}{\ROLE{r}_2}: \{l_j(\sort_j) . 
\M_j \}_{j \in J} 
\PAR
\mu \RecT . \M
\PAR
\GAM
\PAR
\goto  \Gmodular
\end{array}
$$
{\em Declarations} associate names to light global types, for example in Figure \ref{fig:intro:sepsession} the name $\Gmodular_a $ is associated with the  type 
$$\begin{array}{l}
\GInter{\ROLE{req}}{\ROLE{map}}: {\color{blue}\textit{req1}}(\stringk) . ~~
\GInter{\ROLE{map}}{\ROLE{req}}: 
\{ \ {\color{blue}\textit{yes1}}(\stringk). \goto \Gmodular_b 
,  {\color{blue}\textit{\textit{no1}}}( ) . \kend  \ \}\end{array}$$
We denote by $\emptyset$ the empty declaration. 

The type $\goto  \Gmodular$ prescribes that the interaction continues by opening a new session with light global type $\M$ such that $\Gmodular=\M$ belongs to the set of current declarations. 
For example according to the declarations in Figure \ref{fig:intro:sepsession} $\goto  \Gmodular_b$ asks to open a new session with type:
$$\begin{array}{l}
\GInter{\ROLE{req}}{\ROLE{issuer}}: {\color{red}\textit{req2}}(\stringk) . ~~\GInter{\ROLE{issuer}}{\ROLE{req}}: \{ \ {\color{red} \textit{yes2}}(\intk).\goto \Gmodular_c 
, {\color{red}\textit{no2}}( ) . \kend  \ \}
\end{array}$$

\section{Lightening global types}
\label{sec:decompose}
This section describes a function for removing redundant interactions (Definition \ref{def:redundant})  from (possibly light) global types.
It uses lightening, since it adds $\gotoS$ constructors and declarations. 
In the next section we will show that the initial and final protocols describe the same non-redundant interactions. 

We consider an interaction redundant when only one label can be sent and the message is not meaningful, i.e. it has type $\unitk$ and it does not appear under a recursion. More precisely, by defining light global contexts (without recursion) as follows:
$$
\begin{array}{rcl}
\Context & :: = &  [ \;  ] \PAR\GInter{\ROLE{r}_1}{\ROLE{r}_2} : 
\{ l_j (\sort_j) . \GAM_j, l (\sort) . \Context \}_{j \in J} 
\end{array}
$$
we get:
\begin{DEF}[Redundant interaction] \label{def:redundant} \rm
The interaction 
$\GInter{\ROLE{r}_1}{\ROLE{r}_2}: l() $ is {\em redundant} in 
$$\Context[\GInter{\ROLE{r}_1}{\ROLE{r}_2}: \  l( ) . \M \ ].$$
\end{DEF}
Interactions sending only one label and with messages of type $\unitk$ are needed inside recursions when they terminate the cycle. For example the interaction $\GInter{\ROLE{r_2}}{\ROLE{r_3}}:\textit{stop}()$ cannot be erased in the type
$$\mu \RecT .\GInter{\ROLE{r_1}}{\ROLE{r_2}}:\{\textit{goon}(\intk).\GInter{\ROLE{r_2}}{\ROLE{r_3}}:\textit{goon}(\intk).\RecT,
\textit{stop}().\GInter{\ROLE{r_2}}{\ROLE{r_3}}:\textit{stop}().\kend\}$$

We define $\modularise{\M}$ as a function for 
removing redundant interactions in $\M$ by decomposing $\M$ into separated light global types. The basic idea is that, in order to erase redundant communications, the roles which are the receivers of these communications must  get the communications belonging to other branches in separated types. 
Therefore the result of $\modularise{\M}$ 
is a new light global type and a set of declarations with fresh names.

The function $\modularise{\M}$ uses the auxiliary function $\gotofun{\M}{\ROLE{r}}$ which searches inside the branches of $\M$ the first communications with receiver ${\ROLE{r}}$ and replaces these communications (and all the following types) by $\gotoS$s to newly created names, which are associated by declarations to the corresponding types. Therefore also the result of $\gotofun{\M}{\ROLE{r}}$ is a new light global type and a set of declarations with fresh names.

\begin{DEF}[The function $\Downarrow$]\label{dw}\rm
The function $\gotofun{\M}{\ROLE{r}}$ is defined by induction on $\M$:
$$
\begin{array}{c}
\gotofun{\kend}{\ROLE{r}} = \pa\kend\emptyset \qquad
\gotofun{\RecT}{\ROLE{r}} = \pa\RecT\emptyset \qquad
\gotofun{ \goto  \Gmodular}{\ROLE{r}} = \pa{\goto  \Gmodular}\emptyset\qquad
\gotofun{\mu \RecT . \M }{\ROLE{r}} =  \pa{\mu \RecT . \M}\emptyset\\\\
\gotofun{
(\GInter{\ROLE{r}_1}{\ROLE{r}_2}: \{ l_j (\sort_j) . \M_j \}_{j \in J})
}
{\ROLE{r}} =
\begin{cases}
\pa{\M_j}{\emptyset}& \text{if} \ \ROLE{r}_2 = \ROLE{r}\text{ and}\\ &\sort_j=\unitk \text{ and}\\& J=\{j\}\\
\pa{\goto \Gmodular}
{\Gmodular=
\GInter{\ROLE{r}_1}{\ROLE{r}_2} : \{ l_j (\sort_j) . \M_j \}_{j \in J}
}
& \text{if} \ \ROLE{r}_2 = \ROLE{r}\text{ and}\\ 
\qquad\qquad\text{where } \Gmodular \text{ is a  fresh name}&\sort_j\not=\unitk \text{ or}\\
& J\not=\{j\}\\
\pa{ \GInter{\ROLE{r}_1}{\ROLE{r}_2} : \{ l_j (\sort_j) . \M'_j \}_{j \in J}}
{\bigcup_{j\in J} D_j}
& \text{otherwise}\\
\hfill\text{where }\gotofun{\M_j}{\ROLE{r}} = \pa{\M'_j}{D_j} \text{ for } j\in J.
\end{cases} 
\end{array}
$$
\end{DEF}
We remark that the first case of the definition of $\Downarrow$ for a branching type allows this function to eliminate a redundant interaction. Therefore one application of function $\mathbb{L}$ can get rid of more than one redundant interaction, as exemplified below. 

The function $\mathbb{L}$ is defined by induction on the context in which the redundant interaction appears. 

\begin{DEF}[The function $\mathbb{L}$] \label{def:modularise} \rm  The application of the function  $\mathbb{L}$ to $\Context[\GInter{\ROLE{r}_1}{\ROLE{r}_2}: \  l() . \M\   ]$ for eliminating the interaction $\GInter{\ROLE{r}_1}{\ROLE{r}_2}:  \ l() . \M\ $ is defined by induction on $\Context$:

$$
\begin{array}{lcl}
\modularise{\GInter{\ROLE{r}_1}{\ROLE{r}_2}:  \ l(  ) . \M\ }&=&\pa \M\emptyset
\end{array}$$
$$
\begin{array}{lcl}
\modularise{\GInter{\ROLE{r}'_1}{\ROLE{r}'_2} : 
\{ l_j (\sort_j) .\M_j, 
    l '(\sort) . \Context[\GInter{\ROLE{r}_1}{\ROLE{r}_2}:  \ l(  ) . \M\ ]\}_{j \in J}}   = \\\\
\begin{cases}
\pa{\GInter{\ROLE{r}'_1}{\ROLE{r}'_2} : 
\{ l_j (\sort_j) . \M_j,
   l' (\sort) . \Context[ \M  ] \}_{j \in J}}\emptyset
&
\text{if} \ \ROLE{r}_2 = \ROLE{r}'_1\ 
\text{or} \ \ROLE{r}_2 = \ROLE{r}'_2,\\
\pa{\GInter{\ROLE{r}'_1}{\ROLE{r}'_2} : 
\{ l_j (\sort_j) . 
  \M_j', 
   l' (\sort) . \M' \}_{j \in J}}{\bigcup_{j\in J} D_j\cup D}
& 
\text{if }  \ROLE{r}_2  \not = \ROLE{r}'_1\ 
\text{ and } \ \ROLE{r}_2  \not = \ROLE{r}'_2,\\
\qquad\qquad\qquad\qquad\text{where }
\gotofun{\M_j}{\ROLE{r}_2} =\pa{\M'_j}{D_j} \text{ for } j\in J\\
\qquad\qquad\qquad\qquad\qquad \text{and }\modularise{\Context[\GInter{\ROLE{r}_1}{\ROLE{r}_2}: \  \ l() . \M\   ]} =\pa{\M'}{D}
\end{cases}
\end{array}
$$
\end{DEF}

The branching case of the above definitions needs some comments. If the receiver $\ROLE{r}_2$ is also the sender or the receiver of the top branching, then she is aware of the choice of the label $l'$ and the redundant interaction can simply be erased. Otherwise $\ROLE{r}_2$  must receive a communication in all branches $l_j$ and in this case these communications need to be replaced by $\gotoS$s to fresh names of light global types. For this reason we compute $\gotofun{\M_j}{\ROLE{r}_2}$ for all $j\in J$. Moreover we recursively call the mapping $\mathbb{L}$ on $\Context[\GInter{\ROLE{r}_1}{\ROLE{r}_2}: \  \ l() . \M \   ]$.

We now show two applications of the lightening function, first to the global type $\GAM$ of Figure \ref{fig:intro:sepsession} and then to the light global type obtained as a result of the first application. 
The application of $\mathbb{L}$ for eliminating the interaction $\GInter{\ROLE{req}}{\ROLE{store}}: \textit{no4}(  )$ gives as a result $\pa{\M'_a} {\Gmodular_c = \M_c}$, where $\M_c$ is the type associated to $\Gmodular_c$ in Figure \ref{fig:intro:sepsession} and:
$$\begin{array}{rcl}
\M'_a &=&
\GInter{\ROLE{req}}{\ROLE{map}}: {\color{blue}\textit{req1}}(\stringk).
  \GInter{\ROLE{map}}{\ROLE{req}}: \\
  & & \{ \ {\color{blue}\textit{yes1}}(\stringk). 
            \GInter{\ROLE{req}}{\ROLE{issuer}}: {\color{red}\textit{req2}}(\stringk). 
            \GInter{\ROLE{issuer}}{\ROLE{req}}:\\
      & & \ \ \ \{ \  {\color{red}
             \textit{yes2}}(\intk). \goto \Gmodular_c,
            {\color{red} 
            \textit{no2}}(  ). \kend \}, \\
& & \ \ {\color{blue} \textit{no1}}( ).
         \GInter{\ROLE{req}}{\ROLE{issuer}}: \textit{no5}(  ).\kend  \} 
\end{array}$$
Notice that also the redundant interaction $\GInter{\ROLE{req}}{\ROLE{store}}: \textit{no6}(  )$ is erased in $\M'_a$. 
The same result can be obtained by applying $\mathbb{L}$ for eliminating the interaction $\GInter{\ROLE{req}}{\ROLE{store}}: \textit{no6}(  )$. 
Now if we apply $\mathbb{L}$ to $\M'_a$ for eliminating $ \GInter{\ROLE{req}}{\ROLE{issuer}}: \textit{no5}(  )$ 
we get $\pa{\M_a}{\Gmodular_b = \M_b}$, \MD{where $\M_a$ and $\M_b$ are the types associated to 
$\Gmodular_a$ and $\Gmodular_b$ in Figure \ref{fig:intro:sepsession}. So we get all declarations shown in Figure~\ref{fig:intro:sepsession}.}

\section{Safety of the  lightening function}
\label{sec:theory}
In order to discuss the correctness of our lightening function, following \cite{CDP12} we view light global types with relative sets of declarations as denoting languages of interactions which can occur in multi-party sessions. The only difference is that recursive types do not reduce, the reasons being that our lightening function does not modify them. More formally the following definition gives a labelled transition system for light global types with respect to a fixed set of declarations. As usual $\tau$ means a silent move, and $\checkmark$ session termination. 
\medskip

\begin{DEF}[LTS] \label{def:actionrule} \rm
$\;$\\[10pt]
$$
\begin{array}{c}
\begin{array}{l}
\mrule{call}
\\[0.3mm]
\INFER{\Gmodular = \M \in D }
{\goto \Gmodular \DTRANS{\tau} \M} 
\end{array}
\begin{array}{l}
\mrule{rec}
\\[0.3mm]
\mu \RecT . \M \DTRANS{\checkmark} 
\end{array}
\begin{array}{l}
\mrule{red}
\\[0.3mm]
\GInter{\ROLE{r}_1}{\ROLE{r}_2}:\ l ( ) . \M\ \DTRANS{\tau} \M
\end{array}
\begin{array}{l}
\mrule{end}
\\[0.3mm]
\kend \DTRANS{\checkmark}
\end{array}\\[20pt]
\begin{array}{l}
\mrule{act}
\\[0.5mm]
\INFER{  J \not=\{j\} \ \text{or} \ \sort_j \not =\unitk 
}{
\GInter{\ROLE{r}_1}{\ROLE{r}_2} : \{ l_j (\sort_j) . \M_j \}_{j \in J}
\DTRANS{\GInter{\ROLE{r}_1}{\ROLE{r}_2} :  l_j (\sort_j) } \M_j
}
\end{array}
\end{array} 
$$
\end{DEF}

We convene that $\lambda$ ranges over $\GInter{\ROLE{r}_1}{\ROLE{r}_2} :  l (\sort)$ and $\checkmark$, and that $\sigma$ ranges over sequences of $\lambda$. Using the standard notation:
$$\begin{array}{rcl}
\M \DDTRANS{\lambda} \M' \ &\text{if} & \ 
\M \DSTRANS{\tau} \M_1 \DTRANS{\lambda} \M_2 \DSTRANS{\tau} \M'\text{ for some } \M_1, \M_2\\
\M \DDTRANS{\lambda \cdot \sigma} \M' \ &\text{if} & \
\M \DDTRANS{\lambda} \M_1 \DDTRANS{\sigma} \M'\text{ for some } \M_1\\
\M \DDTRANS{\sigma\cdot\checkmark} \ &\text{if} & \
\M \DDTRANS{\sigma} \M_1 \DDTRANS{\checkmark} \text{ for some } \M_1
\end{array}
$$

The language generated by $\M$ and relative to $D$ is the set of sequences $\sigma$ obtained by reducing $\M$ using $D$. We take this language as the meaning of $\M$ relative to $D$ (notation $\actset{\M}{D}$).

\medskip

\begin{DEF}[Semantics] \label{def:actset} \rm
$\actset{\M}{D} = 
\{ \sigma \PAR
    \M \DDTRANS{\sigma} \M' \ \text{for some}\ \M' \ \text{or}\ \M \DDTRANS{\sigma}
\}
$.
\end{DEF}
Soundness of lightening then amounts to show that the function $\mathbb{L}$ preserves the meaning of light global types with respect to the relative sets of declarations. A first lemma shows soundness of the function $\Downarrow$. 

\medskip

\begin{LEM} \label{len:gotofun} \rm
Let $\M$ be a light global type with declaration $D$.
If $\gotofun{\M}{\ROLE{r}} = (\M', D')$,
then $\actset{\M}{D} = \actset{\M'}{D' \cup D}$.
\end{LEM}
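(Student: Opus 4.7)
I would proceed by structural induction on $\M$, following the case split in Definition~\ref{dw}. The four base cases $\M\in\{\kend,\RecT,\goto\Gmodular,\mu\RecT.\M_0\}$ all yield $\gotofun{\M}{\ROLE{r}}=\pa{\M}{\emptyset}$, so the required equality collapses to $\actset{\M}{D}=\actset{\M}{D\cup\emptyset}$ and there is nothing to prove.

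For the inductive step $\M=\GInter{\ROLE{r}_1}{\ROLE{r}_2}:\{l_j(\sort_j).\M_j\}_{j\in J}$, I split on the three clauses of Definition~\ref{dw}. In the first clause ($\ROLE{r}_2=\ROLE{r}$, $J=\{j\}$, $\sort_j=\unitk$) the side condition of \mrule{act} fails, so the only transition available to $\M$ is $\M\DTRANS{\tau}\M_j$ via \mrule{red}; since $\tau$-steps are absorbed by $\DDTRANS{\lambda}$, this gives $\actset{\M}{D}=\actset{\M_j}{D}$, as required. In the second clause ($\ROLE{r}_2=\ROLE{r}$ but the interaction is not redundant) the result is $\pa{\goto\Gmodular}{\{\Gmodular=\M\}}$ with $\Gmodular$ fresh; rule \mrule{call} gives $\goto\Gmodular\DTRANS{\tau}\M$ in the extended declarations, so $\actset{\goto\Gmodular}{\{\Gmodular=\M\}\cup D}=\actset{\M}{\{\Gmodular=\M\}\cup D}$, and freshness of $\Gmodular$ (it occurs neither in $\M$ nor in the bodies of $D$) yields $\actset{\M}{\{\Gmodular=\M\}\cup D}=\actset{\M}{D}$. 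In the third clause ($\ROLE{r}_2\not=\ROLE{r}$) the top-level branchings of $\M$ and of the returned type admit the same labelled transitions, matching branch $j$ to branch $j$; the induction hypothesis gives $\actset{\M_j}{D}=\actset{\M'_j}{D_j\cup D}$ for each $j\in J$, and since each $D_k$ with $k\not=j$ binds only fresh names not occurring in $\M'_j$, adding those bindings does not change $\actset{\M'_j}{D_j\cup D}$.

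The main obstacle is the freshness bookkeeping in the second and third clauses: one must be sure that extending a declaration set by a binding $\Gmodular=\M_0$ never silently enables a new reduction of any previously existing type. I would isolate this as a small auxiliary fact stating that if $\Gmodular$ occurs neither in $\M$ nor in the bodies of $D$, then $\actset{\M}{D}=\actset{\M}{D\cup\{\Gmodular=\M_0\}}$, provable by an easy transition-system bisimulation since \mrule{call} is the only rule that inspects the declaration set and it can only fire on $\Gmodular$-occurrences. With this lemma in hand, the three clauses above chain together without further subtlety, and the overall structural induction closes.
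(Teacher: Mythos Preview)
Your proof is correct and follows essentially the same structural induction on $\M$ as the paper, splitting on the clauses of Definition~\ref{dw} and invoking \mrule{red}, \mrule{call}, and \mrule{act} in the three branching sub-cases exactly as the paper does. You are in fact more careful than the paper about the freshness bookkeeping: the paper silently uses both instances of your auxiliary fact (that a fresh binding $\Gmodular=\M_0$ does not affect $\actset{\cdot}{\cdot}$) without isolating or stating it.
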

\begin{proof}
The proof is by induction on $\M$ and by cases on Definition \ref{dw}. The only interesting case is $\M = \GInter{\ROLE{r}_1}{\ROLE{r}_2}: \{ l_j (\sort_j). \M_j \}_{j \in J}$.
\begin{enumerate}
\item If $\gotofun{\M}{\ROLE{r}} = 
(\M_j, \emptyset)$,
then $\sort_j = \unitk$ and $J = \ASET{j}$.
By rule \mrule{red} we have
$\M \DTRANS{\tau} \M_j$,
thus $\actset{\M}{D} = \actset{\M_j}{D} = \actset{\M'}{D}$.

\item If $\gotofun{\M}{\ROLE{r}} =
(\goto \Gmodular, \Gmodular = \GInter{\ROLE{r}_1}{\ROLE{r}_2}: \{ l_j (\sort_j). \M_j \}_{j \in J})$,
then by rule \mrule{call} we have 
\TZ{$\M' \xrightarrow{\tau}_{\ASET{\Gmodular=\M} \cup D} \M$}.
Thus $\actset{\M}{D} = \actset{\M'}{\TZ{\ASET{\Gmodular=\M} \cup D}}$.

\item If $\gotofun{\M}{\ROLE{r}}=
(\GInter{\ROLE{r}_1}{\ROLE{r}_2}: \{ l_j (\sort_j). \gotofun{\M'_j}{\ROLE{r}} \}_{j \in J}, \bigcup_{j \in J} D_j)$,
then $\gotofun{\M_j}{\ROLE{r}} = ( \M'_j, D_j)$ for $j \in J$.
By Definition \ref{def:actset} and rule \mrule{act} $\actset{\M}{D} =\bigcup_{j \in J} \{ \ \GInter{\ROLE{r}_1}{\ROLE{r}_2} : l_j (\sort_j) \cdot \sigma \mid \sigma \in \actset{\M_j}{D}\ \}$ and
$\actset{\M'}{\bigcup_{j \in J} D_j\cup D} =\bigcup_{j \in J} \{ \ \GInter{\ROLE{r}_1}{\ROLE{r}_2} : l_j (\sort_j) \cdot \sigma \mid \sigma \in \actset{\M'_j}{\bigcup_{j \in J} D_j\cup D}\ \}$. 
Since by induction $\actset{\M_j}{D} = \actset{\M'_j}{D_j \cup D}$, we conclude $\actset{\M}{D} = \actset{\M'}{D' \cup D}$.
\end{enumerate}
\end{proof}

\begin{THM}[Soundness] \label{thm:safem}\rm 
Let $\M$ be a light global type with set of declarations $D$.
If $\modularise{\M} = (\M', D')$,
then $\actset{\M}{D} = \actset{\M'}{D' \cup D}$.
\end{THM}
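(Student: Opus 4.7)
The plan is to prove the soundness by induction on the context $\Context$ used in Definition~\ref{def:modularise}, since $\mathbb{L}$ is defined by recursion on $\Context$. The lightened type is $\modularise{\Context[\GInter{\ROLE{r}_1}{\ROLE{r}_2}: l() . \M\,]}$, and the auxiliary Lemma~\ref{len:gotofun} will handle the branches produced by $\Downarrow$.

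For the base case $\Context = [\;]$, the function returns $(\M,\emptyset)$ and the claim follows immediately from rule $\mrule{red}$: the silent transition $\GInter{\ROLE{r}_1}{\ROLE{r}_2}: l().\M \DTRANS{\tau} \M$ shows that both sides generate exactly the same traces with respect to $D$, so $\actset{\GInter{\ROLE{r}_1}{\ROLE{r}_2}: l() . \M\,}{D} = \actset{\M}{D}$.

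For the inductive step, $\Context$ has the form $\GInter{\ROLE{r}'_1}{\ROLE{r}'_2}: \{l_j(\sort_j).\M_j,\ l'(\sort).\Context'[\GInter{\ROLE{r}_1}{\ROLE{r}_2}: l().\M]\}_{j\in J}$, and by rule $\mrule{act}$ the language of any branching type is determined by prefixing the action $\GInter{\ROLE{r}'_1}{\ROLE{r}'_2}: l_k(\sort_k)$ to the language of the $k$-th continuation. In the first subcase ($\ROLE{r}_2 \in \{\ROLE{r}'_1,\ROLE{r}'_2\}$), the result $(\GInter{\ROLE{r}'_1}{\ROLE{r}'_2}: \{l_j(\sort_j).\M_j,\ l'(\sort).\Context'[\M]\}_{j\in J}, \emptyset)$ simply replaces the $l'$-continuation by $\Context'[\M]$, whose language coincides with that of $\Context'[\GInter{\ROLE{r}_1}{\ROLE{r}_2}: l().\M]$ by $\mrule{red}$; all other branches are untouched so the trace sets agree. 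In the second subcase we combine Lemma~\ref{len:gotofun} applied to each $\M_j$ (yielding $\actset{\M_j}{D} = \actset{\M'_j}{D_j \cup D}$) with the induction hypothesis applied to the recursive call on $\Context'[\GInter{\ROLE{r}_1}{\ROLE{r}_2}: l().\M]$ (yielding the equality for the $l'$-branch), and then use $\mrule{act}$ branchwise to conclude.

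The main subtlety is in the second subcase: one must observe that adding the fresh declarations $D_j$ and the ones produced by the recursive $\mathbb{L}$ call to $D$ does not affect any trace reachable from the untouched branches, because the fresh names do not occur in $D$ or in the other $\M_i$'s, and conversely the $\goto$ constructors newly introduced are bound exactly to the types that preserve the language by Lemma~\ref{len:gotofun}. Under this freshness convention the declaration-union is essentially disjoint and no declaration is shadowed, so the semantics carries over cleanly. Once this observation is in place the induction goes through mechanically using $\mrule{act}$ and $\mrule{call}$ to push transitions inside the branching structure.
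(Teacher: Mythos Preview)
Your proposal is correct and follows essentially the same approach as the paper: induction on the context $\Context$ (the paper phrases it as ``induction on $\M$ and by cases on Definition~\ref{def:modularise}'', but the recursion is on $\Context$), with the base case handled by rule \mrule{red}, subcase (a) by \mrule{red} applied under the context, and subcase (b) by combining Lemma~\ref{len:gotofun} on each $\M_j$ with the inductive hypothesis on the recursive $\mathbb{L}$ call. Your explicit remark about freshness of the newly introduced names (ensuring that enlarging $D$ does not change the semantics of the untouched branches) is a point the paper leaves implicit; it is worth keeping.
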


\begin{proof}
The proof is by induction on $\M$ and by cases on Definition \ref{def:modularise}. 
\begin{enumerate}
\item If $\M = \GInter{\ROLE{r}_1}{\ROLE{r}_2}:  l(  ) . \M''$, then
         $\modularise{\M} = (\M'', \emptyset)$. We conclude since by Definition \ref{def:actset} and 
   by rule \mrule{red}
                 $\actset{\M}{D} = \actset{\M''}{D}$.
         
\item Let $\M = \Context_0[\GInter{\ROLE{r}_1}{\ROLE{r}_2}:  l(  ) . \M''  ]$ where
         $\Context_0 = 
            \GInter{\ROLE{r}'_1}{\ROLE{r}'_2} : 
                \{ l_j (\sort_j) .\M_j, 
                 l '(\sort) .\Context \}]\}_{j \in J}$. By Definition \ref{def:actset} and 
   by rule \mrule{act} 
   $$\begin{array}{lll}\actset{\M}{D}& =& \bigcup_{j \in J} \{ \ \GInter{\ROLE{r'}_1}{\ROLE{r'}_2} : l_j (\sort_j) \cdot \sigma \mid \sigma \in \actset{\M_j}{D}\ \}\cup\\
   &&  \{ \ \GInter{\ROLE{r'}_1}{\ROLE{r'}_2} : l' (\sort) \cdot \sigma \mid \sigma \in \actset{\Context[\GInter{\ROLE{r}_1}{\ROLE{r}_2}:  l(  ) . \M''  ]}{D}\ \}\end{array}$$
                 \begin{enumerate}\item If
     $\ROLE{r}_2 = \ROLE{r}'_1$ or $\ROLE{r}_2 = \ROLE{r}'_2$, then 
     $$\begin{array}{lll}\actset{\M'}{D}& =& \bigcup_{j \in J} \{ \ \GInter{\ROLE{r'}_1}{\ROLE{r'}_2} : l_j (\sort_j) \cdot \sigma \mid \sigma \in \actset{\M_j}{D}\ \}\cup\\
   &&  \{ \ \GInter{\ROLE{r'}_1}{\ROLE{r'}_2} : l' (\sort) \cdot \sigma \mid \sigma \in \actset{\Context[ \M''  ]}{D}\ \}\end{array}$$ 
   We conclude since rule \mrule{red} $\actset{\Context[\GInter{\ROLE{r}_1}{\ROLE{r}_2}:  l(  ) . \M''  ]}{D}=\actset{\Context[ \M''  ]}{D}$.

\item If $\ROLE{r}_2 \not = \ROLE{r}'_1$ and $\ROLE{r}_2 \not = \ROLE{r}'_2$, let us    
assume    $\gotofun{\M_j}{\ROLE{r}_2} =\pa{\M'_j}{D_j}$ for $ j\in J$ and\\ $\modularise{\Context[\GInter{\ROLE{r}_1}{\ROLE{r}_2}: \  \ l() . \M\   ]}=\pa{\M_0'}{D_0}$. Then\\[4pt]
     $\begin{array}{lll}\actset{\M'}{\bigcup_{j\in J} D_j\cup D_0\cup D} &=& \bigcup_{j \in J} \{ \ \GInter{\ROLE{r'}_1}{\ROLE{r'}_2} : l_j (\sort_j) \cdot \sigma \mid \sigma \in \actset{\M'_j}{\bigcup_{j\in J} D_j\cup D_0\cup D}\ \}\cup\\
   &&  \{ \ \GInter{\ROLE{r'}_1}{\ROLE{r'}_2} : l' (\sort) \cdot \sigma \mid \sigma \in \actset{\M'_0}{\bigcup_{j\in J} D_j\cup D_0\cup D}\
    \}\end{array}$
   
   \noindent
   We conclude since by Lemma \ref{len:gotofun} $\actset{\M_j}{D}=\actset{\M'_j}{D_j\cup D}$ and by induction 
   $$\actset{\Context[\GInter{\ROLE{r}_1}{\ROLE{r}_2}:  l(  ) . \M''  ]}{D}=\actset{ \M'_0  }{D_0\cup D}.$$
                   \end{enumerate}
\end{enumerate}
In the running example we get $\actset{G}{\emptyset}=\actset{\M'_a}{\Gmodular_c = \M_c}=\actset{\M_a}{\Gmodular_b = \M_b~\Gmodular_c = \M_c}$.
\end{proof}

\section{Related works and conclusion}
\label{sec:final}
In the recent literature global types have been enriched in various directions by making them more expressive through roles~\cite{DY11} or  logical assertions~\cite{BHTY10} or monitoring~\cite{BocchiCDHY13}, more safe through 
 security levels for data and participants~\cite{CCDR10,ccd11} or reputation systems~\cite{bonotgc11}.

The more related papers are~\cite{DemangeonH12} and~\cite{CM12}.  Demangeon and Honda introduce nesting of protocols, that is, the possibility to define a subprotocol independently of its parent protocol, which calls the subprotocol explicitly. Through a call, arguments can be passed, such as values, roles and other protocols, allowing higher-order description. Therefore global types in~\cite{DemangeonH12} are much more expressive than our light types. Carbone and Montesi~\cite{CM12} propose to merge  together protocols interleaved in the same choreography into a single global type, removing costly invitations. Their approach is  opposite to ours, and they deal with implementations, while we deal with types. 

In this paper we show how 
to decompose interactions among multiple participants
in order to remove redundant interactions, by preserving the meaning of (light) global types.
We plan to implement our lightening function in order to experiment its practical utility in different scenarios. 

\vspace*{-10pt}

\paragraph*{\textbf{Acknowledgements}}
We are grateful to Mariangiola Dezani-Ciancaglini for her valuable comments and feedbacks.
We also thank PLACES reviewers for careful reading, 
since we deeply revised this article following their suggestions.

\nocite{*}
\bibliographystyle{eptcs}
\bibliography{session}

\end{document}